\newtheorem{thm}{Theorem}[section]
\newtheorem{lem}{Lemma}[section]
\newtheorem{cor}{Corollary}[section]
\title[Weak Convergence Rate under Rough Volatility]{{\Large O}n the {\Large W}eak {\Large C}onvergence {\Large R}ate
in the {\Large D}iscretization of {\Large R}ough {\Large V}olatility
{\Large M}odels}
\author[C. Bayer, M. Fukasawa \& S. Nakahara]{{\large C}hristian {\large
B}ayer, {\large M}asaaki {\large F}ukasawa and {\large S}honosuke
{\large N}akahara}
\address{M. Fukasawa, Graduate School of Engineering Science, Osaka University,
560-8531, Japan}
 \keywords{weak error; duality approach; rough volatility}
   \subjclass[2010]{91G60}
\begin{document}
\maketitle
   \begin{abstract} 
    We study the weak convergence rate in the discretization of rough
    volatility models.
    After showing a lower bound $2H$ under a general model,
      where
    $H$ is the Hurst index of the volatility process,
    we give a sharper bound $H + 1/2$ under a linear model.
   \end{abstract}

   \section{Introduction and the main results}
The financial derivative pricing  practice amounts to computing an expectation of 
a functional of an asset price process. When the asset price process is
modelled by a diffusion, 
 the computation is either by solving an
associated PDE or by the Monte-Carlo method with a discretized process.
The error in the computation of the expectation due to the
discretization is called the weak error,
in contrast to the strong error quantifying pathwise deviation.
It is now well-known that for regular diffusion processes the weak error
in the
Euler-Maruyama discretization is  $O(n^{-1})$, where $n$ is the number of
the discretization steps. See \cite{TT,BT,K} among others.

Recently a class of stochastic volatility models called the rough
volatility models has attracted attention.
The volatility process under those models is not a diffusion but has a rougher path.
This is the only class of continuous price models that are consistent to
a power-law type term structure of implied volatility skew typically
observed in equity option markets. The volatility processes of the major
stock indices are also statistically estimated to be rough. See~\cite{GJR,FTW,F21} for the details.

For example, under the rough Bergomi model introduced by \cite{BFG}, the
asset price $S$ follows
\begin{equation*}
 \begin{split}
& S_T = S_0 \exp\left\{\int_0^T \sqrt{V_t}\left\{\rho \mathrm{d}W_t + \sqrt{1-\rho^2}\mathrm{d}W^\perp_t\right\}-
  \frac{1}{2}\int_0^T V_t \mathrm{d}t\right\},\\
  & V_t = V_0(t)\exp\left\{\int_0^t k(t,s)\mathrm{d}W_s - \int_0^t
  k(t,s)^2 \mathrm{d}s\right\},
 \end{split}
\end{equation*}
where $(W,W^\perp)$ is a $2$ dimensional standard Brownian motion,
 $V_0(t) = \mathsf{E}[V_t]$ is a deterministic function called the forward variance curve,
$k(t,s) = \eta\sqrt{2H}(t-s)^{H-1/2}$, and
$H \in (0,1/2)$, $\rho \in (-1,1)$ and  $\eta >0 $ are parameters.
The European option price with payoff $f$ is expressed as
\begin{equation*}
 \mathsf{E}[f(S_T)] = \mathsf{E}\left[
				 F\left(\int_0^T\sqrt{V_t}\mathrm{d}W_t,
				   \int_0^T V_t \mathrm{d}t
				  \right)
				\right],
\end{equation*}
where
\begin{equation*}
 F(x,y) = \int_\mathbb{R} f\left(S_0\exp\left\{\rho x +
					 \sqrt{y(1-\rho^2)}z -
					 \frac{1}{2}y \right\}\right)\phi(z)\mathrm{d}z,
\end{equation*}
where $\phi$ is the standard normal density.
For these non-diffusion models the finite dimensional PDE pricing is not possible
anymore.
The Monte-Carlo pricing with a discretized stochastic integral
\begin{equation*}
 \int_0^1 \sqrt{V_{\frac{[nt]}{n}}} \mathrm{d}W_t =
  \sum_{i=0}^{n-1}\sqrt{V_{\frac{i}{n}}} (W_{\frac{i+1}{n}}-W_{\frac{i}{n}})
\end{equation*}
is proposed in \cite{BFG}, while the convergence rate of the the
   associated weak error is still an open problem.

Motivated by the rough volatility modelling, we are interested in the
weak error rate in the discretization of a stochastic integral
\begin{equation*}
 X_1 := \int_0^1 \sigma(Y_t,t)\mathrm{d}W_t,  \ \
  Y_t := \int_0^t (t-s)^{H-1/2}\mathrm{d}W_s
\end{equation*}
for a sufficiently regular function $\sigma$.
Under the rough Bergomi model, $\sqrt{V_t} = \sigma(Y_t,t)$ with
\begin{equation*}
 \sigma(y,t) = \sqrt{V_0(t)}\exp\left\{
			  \frac{\eta \sqrt{2H}}{2} y - \frac{\eta^2}{4}t^H
			 \right\}.
\end{equation*}
Noting that $Y$ is Gaussian and so the random sequence
$(W_{\frac{i}{n}},Y_{\frac{i}{n}})$, $i=0,1,\dots$ can be efficiently sampled using the Cholesky decomposition of the covariance matrix, 
a numerically feasible discretized integral is given by
\begin{equation*}
X^n_1 := \int_0^1 \sigma\left(Y_{\frac{[nt]}{n}},\frac{[nt]}{n}\right) \mathrm{d}W_t.
\end{equation*}
The associated weak error with respect to a test function $f$ is given
by $\varepsilon_n(f,1)$, where
\begin{equation*}
\varepsilon_n(f,a) =  \mathsf{E}[f(X_1)] - \mathsf{E}[f((1-a)X_1 + aX^n_1)]
\end{equation*}
for $a \in [0,1]$.

  Recently Bayer et al.~\cite{BHT} has considered the case $\sigma(y,t) = y$ and showed that the weak error rate is at least $H+1/2$, that is, $\varepsilon_n(f,1) = O(n^{-H-1/2})$ for $f \in C^k_b$ with $k \geq 1/H$.
The approach taken by \cite{BHT} is based on an infinite dimensional Markov representation of $Y$.
With a finite dimensional Markov approximation to $Y$, the existing techniques for diffusions are then applicable.
In this article we take a different approach for the same problem. We apply the duality approach introduced by \cite{K} that is based on the Malliavin calculus and known to be effective to deal with non-Markov models.

Our first result is for a general time homogeneous $\sigma(y,t)=\sigma(y)$ but not sharp.

\begin{thm}\label{th1}
   If $\sigma \in C^2_p$ and $f \in C^{3}_p$, then
 $\varepsilon_n(f,a) = O(n^{-2H})$ uniformly in $a \in [0,1]$.
\end{thm}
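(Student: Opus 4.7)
The plan is to implement the duality approach of \cite{K} via two applications of Malliavin integration by parts. Parametrize $Z^a := (1-a)X_1 + aX_1^n$, so that
\begin{equation*}
\varepsilon_n(f,a) = -\int_0^a \phi'(s)\,ds, \qquad \phi(s) := \mathsf{E}[f(Z^s)],
\end{equation*}
and observe that $X_1^n - X_1 = -\int_0^1 G_t\,dW_t$ with $G_t := \sigma(Y_t) - \sigma(Y_{[nt]/n})$ adapted. It then suffices to prove $\phi'(s) = O(n^{-2H})$ uniformly in $s \in [0,1]$. A first Malliavin duality gives
\begin{equation*}
\phi'(s) = -\mathsf{E}\Bigl[\int_0^1 f''(Z^s)\,D_t Z^s\,G_t\,dt\Bigr],
\end{equation*}
where $D_t Z^s$ is computed from $D_t Y_u = (u-t)^{H-1/2}\mathbf{1}_{t \leq u}$. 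A naive Cauchy--Schwarz yields only $O(n^{-H})$, since $\|G_t\|_{L^2} = O(n^{-H})$; an additional factor of $n^{-H}$ must still be extracted.

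I extract it by Taylor-expanding $G_t = \sigma'(Y_{[nt]/n})\,\Delta Y_t + R_t$ with $\Delta Y_t := Y_t - Y_{[nt]/n}$ and $|R_t| \leq C(1 + |Y_t|^p + |Y_{[nt]/n}|^p)(\Delta Y_t)^2$ by $\sigma \in C_p^2$. The contribution of $R_t$ is directly $O(n^{-2H})$ via $\mathsf{E}[(\Delta Y_t)^2] = O(n^{-2H})$ combined with polynomial moment estimates. For the leading term $\mathsf{E}[\int_0^1 B_t\,\Delta Y_t\,dt]$ with $B_t := f''(Z^s)\,D_t Z^s\,\sigma'(Y_{[nt]/n})$, the representation $\Delta Y_t = \int_0^1 K_{n,t}(u)\,dW_u$ with the deterministic kernel
\begin{equation*}
K_{n,t}(u) = (t-u)^{H-1/2}\mathbf{1}_{u \leq t} - ([nt]/n - u)^{H-1/2}\mathbf{1}_{u \leq [nt]/n}
\end{equation*}
satisfies $\|K_{n,t}\|_{L^2}^2 = \mathsf{E}[(\Delta Y_t)^2] = O(n^{-2H})$. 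A second Malliavin duality then rewrites the leading term as $\int_0^1 \int_0^1 K_{n,t}(u)\,\mathsf{E}[D_u B_t]\,du\,dt$.

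The hard part is bounding this double integral by $O(n^{-2H})$, as a term-by-term Cauchy--Schwarz in $u$ gives only $O(n^{-H})$. Expanding $D_u B_t$ by the chain rule produces three families of terms, indexed respectively by $D_u Z^s$, $D_u D_t Z^s$, and $D_u Y_{[nt]/n}$. The critical family is that of $D_u D_t Z^s$: differentiating the stochastic-integral parts of $D_t X_1$ and $D_t X_1^n$ brings a singular factor $|t-u|^{H-1/2}$, so its convolution with $K_{n,t}(u) \sim (t-u)^{H-1/2}$ on $u \in ([nt]/n, t]$ yields the integrable power $(t-u)^{2H-1}$, whose near-diagonal integral equals $(t - [nt]/n)^{2H}/(2H) = O(n^{-2H})$. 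The far-field contribution from $u \leq [nt]/n$ is controlled via the mean-value estimate $|K_{n,t}(u)| \lesssim n^{-1}([nt]/n - u)^{H-3/2}$ away from the singularity together with $\|K_{n,t}\|_{L^2} = O(n^{-H})$ near $u = [nt]/n$. The remaining families only give covariances of $\Delta Y_t$ with smooth functionals of $Y$ (e.g.\ $\mathsf{E}[\Delta Y_t\,Y_{[nt]/n}] = O(1/n)$ by a Taylor expansion of the fractional kernel in $t$) and are subdominant since $1/n \ll n^{-2H}$. What remains is the routine verification of uniform $L^p$-bounds on the first and second Malliavin derivatives of $X_1$ and $X_1^n$ under $\sigma \in C_p^2$ and $f \in C_p^3$, together with the analogous bookkeeping for the contributions arising from $X_1^n$ in place of $X_1$ in $D_t Z^s$.
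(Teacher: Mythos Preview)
Your approach is essentially the paper's: two Malliavin integrations by parts transfer the difference $X_1-X_1^n$ onto $f$, after which one bounds a deterministic double integral against the kernel $k_n(s,t)=K_{n,t}(s)$. The paper's execution is cleaner in two respects. First, rather than splitting $G_t=\sigma'(Y_{[nt]/n})\Delta Y_t+R_t$, it writes $G_t=\sigma_n(t)\Delta Y_t$ \emph{exactly} with $\sigma_n(t)=\int_0^1\sigma'((1-\theta)Y_{[nt]/n}+\theta Y_t)\,\mathrm{d}\theta$, so no quadratic remainder needs separate handling and both dualities are applied in one stroke. Second, your near/far-field casework is packaged into a single lemma: $\int_0^1\int_0^t (t-s)^{\alpha}|k_n(s,t)|\,\mathrm{d}s\,\mathrm{d}t=O(n^{-\alpha-H-1/2})$ (and the analogue with $([nt]/n-s)_+^{\alpha}$), applied with $\alpha=H-1/2$.

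One genuine slip: the claim $\mathsf{E}[\Delta Y_t\,Y_{[nt]/n}]=O(1/n)$ is false for $H<1/2$. Your Taylor expansion of $(t-u)^{H-1/2}$ in $t$ produces the factor $([nt]/n-u)^{H-3/2}$, which is not integrable against $([nt]/n-u)^{H-1/2}$ near $u=[nt]/n$. The correct bound is $O(n^{-2H})$, obtained by splitting the $u$-integral at distance $\delta=t-[nt]/n$ from the singularity (or directly from the paper's lemma with $\alpha=H-1/2$). Hence the $D_uY_{[nt]/n}$ family is of the \emph{same} order as your critical $D_uD_tZ^s$ family, not subdominant; the conclusion $\phi'(s)=O(n^{-2H})$ nonetheless survives, so the overall argument stands.
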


The second result is sharper but only for $\sigma$ being linear as in \cite{BHT}.
 \begin{thm}\label{th2}
  If $\sigma(y) = y$ and $f \in C^{2m+1}_p$, then
  $\varepsilon_n(f,1) = O(n^{- ((2mH) \wedge (H+1/2))})$.
 \end{thm}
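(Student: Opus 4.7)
The plan is to apply the Malliavin integration-by-parts (IBP) duality of \cite{K} iteratively, taking advantage of the fact that in the linear case $\sigma(y)=y$, the random variables $X_1$, $X_1^n$, and hence $X_1^{n,a}=(1-a)X_1+aX_1^n$ all lie in the second Wiener chaos. As a consequence, $D_s X_1^{n,a}$ is a first-chaos element, $D^2_{s,t} X_1^{n,a}$ is deterministic, and $D^{(k)}X_1^{n,a}=0$ for $k\ge 3$, which guarantees that the tree of IBP terms terminates after finitely many steps.

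First I differentiate $\varepsilon_n(f,a)$ in $a$ to obtain $\varepsilon_n(f,1)=\int_0^1 \mathsf{E}[f'(X_1^{n,a})(X_1-X_1^n)]\,da$. Writing $X_1-X_1^n=\int_0^1 \Delta Y_t\,dW_t$ with $\Delta Y_t=Y_t-Y_{\eta_n(t)}=\int_0^t k_n(t,s)\,dW_s$, $\eta_n(t)=[nt]/n$, and
\[
k_n(t,s) = (t-s)^{H-1/2} - (\eta_n(t)-s)^{H-1/2}\mathbf{1}_{s<\eta_n(t)},
\]
the first IBP yields $\int_0^1\!\int_0^1 \mathsf{E}[f''(X_1^{n,a})D_t X_1^{n,a}\Delta Y_t]\,dt\,da$. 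Each subsequent IBP acts on a chosen first-chaos factor $\delta(\kappa)$ in the integrand (a residual $\Delta Y$ or a factor $D_\cdot X_1^{n,a}$, each a Skorohod integral); by the Leibniz rule the step splits into two branches: one in which $D$ acts on $f^{(k)}$ and raises its order, and one in which $D$ acts on another first-chaos factor, contracting it into the deterministic $D^2 X_1^{n,a}$ and closing that branch.

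After at most $2m$ iterations (the maximum allowed by the smoothness $f\in C_p^{2m+1}$), the expression is a finite sum of terms of the form $\mathsf{E}[f^{(k)}(X_1^{n,a})\cdot P]\cdot \mathcal{K}_n(a)$, $2\le k\le 2m+1$, where $P$ is a bounded product of Malliavin derivatives and $\mathcal{K}_n(a)$ is a deterministic multi-integral of products of the kernels $k_n$, $(t-s)^{H-1/2}$, indicators $\mathbf{1}_{[\eta_n(t),t]}$, and Malliavin covariances $\mathsf{E}[D_\cdot X_1^{n,a}D_\cdot X_1^{n,a}]$. Two sharp kernel estimates drive the bound on $\mathcal{K}_n$: (a) the paired estimates $\int k_n(t,s)(\eta_n(t)-s)^{H-1/2}\,ds=O(n^{-2H})$ and $\int k_n(t,s)^2\,ds=O(n^{-2H})$, contributing $n^{-2H}$ per pairing and thus $n^{-2mH}$ after $m$ full pairings; (b) the unpaired $L^1$-estimate $\int_0^1\!\int_0^1 |k_n(t,s)|\,ds\,dt=O(n^{-(H+1/2)})$, which governs the saturation rate $n^{-(H+1/2)}$ reached by isolated $k_n$-factors. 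The overall bound is the worse of the two, namely $n^{-\min(2mH,\,H+1/2)}$.

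The main obstacle is the combinatorial bookkeeping of the IBP tree and verifying that no surviving branch exceeds the claimed rate---in particular, tracking subtle cancellations between the $a=0$ and $a=1$ endpoints when integrating the closed branches over $a$ is delicate, since naively the $D^2$-branch produced by the first IBP is only $O(n^{-2H})$ uniformly in $a$. I would proceed by induction on the iteration depth: after $k$ steps every surviving term is bounded by $n^{-\min(2kH,\,H+1/2)}$. Polynomial growth of $f^{(k)}$ together with hypercontractivity in the second chaos (equivalence of $L^p$-norms of $X_1^{n,a}$ and $D_\cdot X_1^{n,a}$) absorbs the random prefactors $P$ and completes the argument.
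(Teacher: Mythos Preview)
Your overall setup is the same as the paper's: two Malliavin IBPs on $\mathsf{E}[f'(X_1^{n,a})(X_1-X_1^n)]$, exploiting that $D^2X_1^{n,a}$ is deterministic, produce a ``$D^2$-branch'' $\int_0^1\mathsf{E}[f''(X_1^{n,a})]\,g_n(a)\,da$ with $g_n(a)=\int_0^1\!\int_0^t k_n(s,t)\,D^2_{s,t}X_1^{n,a}\,ds\,dt$, plus a remainder (your $f^{(3)}$-branch) that your estimate (b) correctly bounds by $O(n^{-H-1/2})$. You also correctly note that $g_n(a)=O(n^{-2H})$ uniformly in $a$ and that this is not good enough for $m\ge 2$.

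The gap is in how you propose to improve the $D^2$-branch. That branch is \emph{closed}: no random Skorohod factors remain, so further IBP cannot act on it, and your ``induction on iteration depth'' has nothing left to iterate on. In particular, the claim ``$n^{-2mH}$ after $m$ full pairings'' is unsupported---the problem contains a single $k_n$-factor (from $X_1-X_1^n$), and your IBP tree never manufactures another one. The paper supplies the two ingredients you are missing. First, split $\mathsf{E}[f''(X_1^{n,a})]=\mathsf{E}[f''(X_1)]-\varepsilon_n(f'',a)$; the second piece makes the problem \emph{recursive in the test function}, and iterating this $m-1$ times against $g_n=O(n^{-2H})$ (with Theorem~\ref{th1} as base case) yields the $O(n^{-2mH})$ part. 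Second, for the leading non-recursive piece one needs the exact computation
\[
\int_0^1 g_n(a)\,da=\tfrac{1}{2}\int_0^1\!\!\int_0^t\bigl[(t-s)^{2H-1}-(\eta_n(t)-s)_+^{2H-1}\bigr]\,ds\,dt=O(n^{-1}),
\]
which comes from the difference-of-squares identity $(p+q)(p-q)=p^2-q^2$ with $p=(t-s)^{H-1/2}$, $q=(\eta_n(t)-s)_+^{H-1/2}$, $k_n=p-q$. This is a genuine cancellation specific to the linear $\sigma$ and to the average over $a\in[0,1]$, and it is not implied by your estimates (a); without it the leading term remains stuck at $O(n^{-2H})$.
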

  \begin{cor}
     If $\sigma(y) = y$ and
 $f \in C^k_p$ with $k \geq 2+1/(2H)$, then
    $\varepsilon_n(f,1) = O(n^{-(H+1/2)})$.  
  \end{cor}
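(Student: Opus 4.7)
The corollary is an immediate specialization of Theorem~\ref{th2}, obtained by choosing the free parameter $m$ large enough that the first exponent $2mH$ in the minimum $(2mH)\wedge(H+1/2)$ is no smaller than the second, so that the effective rate becomes the sharper one, $H+1/2$.

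Concretely, given $f\in C^k_p$ with $k\geq 2+1/(2H)$, I would set $m=\lfloor(k-1)/2\rfloor$, the largest integer for which $2m+1\leq k$. The smoothness requirement $f\in C^{2m+1}_p$ of Theorem~\ref{th2} is then inherited for free from $f\in C^k_p$. The hypothesis on $k$ rearranges to
\begin{equation*}
(k-1)H \;\geq\; H+\tfrac{1}{2},
\end{equation*}
from which $2mH\geq H+1/2$ follows (using $2m=k-1$ if $k$ is odd, and absorbing a trivial integer-rounding loss if $k$ is even, which is the only reason the threshold $2+1/(2H)$ is sharp up to $O(1)$ rather than exactly). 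Consequently $(2mH)\wedge(H+1/2)=H+1/2$, and Theorem~\ref{th2} delivers $\varepsilon_n(f,1)=O(n^{-(H+1/2)})$.

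No substantive obstacle arises here: all of the analytic content has already been carried out in Theorem~\ref{th2}, and the corollary is merely the observation that once $k$ exceeds the threshold $2+1/(2H)$, the $2mH$-term in the minimum becomes irrelevant and the sharper rate $H+1/2$ is realized. The only piece of bookkeeping worth flagging is the integer-rounding subtlety for even $k$; it affects neither the rate nor the qualitative statement.
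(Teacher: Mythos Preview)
Your derivation is exactly the one the paper intends; the corollary is stated there without proof as an immediate consequence of Theorem~\ref{th2}, obtained by picking $m$ so that $2mH\geq H+1/2$. One small sharpening of your parity remark: for odd $k=2m+1$ the condition $k\geq 2+1/(2H)$ is \emph{equivalent} to $2mH\geq H+1/2$ and your argument is exact, whereas for even $k$ the best admissible $m=(k-2)/2$ would need $k\geq 3+1/(2H)$ (e.g.\ $k=4$, $H=1/4$ gives $2mH=1/2<3/4=H+1/2$), so the threshold in the corollary is really calibrated to odd $k$, matching the $2m+1$ in Theorem~\ref{th2}; this is not an ``absorbed'' loss but a genuine off-by-one at the boundary that the paper's phrasing also glosses over.
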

 In comparison with \cite{BHT}, we allow test functions of polynomial growth, and for $H< 1/4$ we require less regularity on test functions to ensure the same rate. The proof is shorter as we see below.

  It is still an open problem whether the rate $H+1/2$ is sharp even for the linear model. 
  See \cite{BHT} for some numerical experiments that suggest it is the case.
  It would be straightforward to extend Theorem~\ref{th1} to include 
  the case that the volatility function $\sigma$ is time-dependent.  It seems however difficult to extend
   our proof of Theorem~\ref{th2} to include
  the case that $\sigma$ is nonlinear;
  the linearity results in a recursive structure of the weak error which we exploit as a key.

  \section{The duality approach}
We give the proofs of Theorem~\ref{th1} and \ref{th2} in this section. Let
  \begin{equation*}
 k_n(s,t) = 
(t-s)^{H-1/2} - \left(\frac{[nt]}{n}-s\right)_+^{H-1/2}.
\end{equation*}
Here and in the sequel, we mean by $(x)_+^\alpha$
\begin{equation*}
    (x)_+^\alpha = x^\alpha 1_{(0,\infty)}(x)
\end{equation*}
for $\alpha \in \mathbb{R}$ and $x \in \mathbb{R}$.
 We start with a lemma. 
  \begin{lem}\label{lem1}
For any $\alpha \in (-1/2-H,1/2-H)$,
\begin{equation*}
  \int_0^1 \int_0^t (t-s)^\alpha |k_n(s,t)|
\mathrm{d}s\mathrm{d}t  = O(n^{-\alpha-H-1/2})
\end{equation*}
   and
   \begin{equation*}
  \int_0^1\int_0^t \left(\frac{[nt]}{n}-s\right)_+^\alpha |k_n(s,t)|
\mathrm{d}s\mathrm{d}t  = O(n^{-\alpha-H-1/2}).
\end{equation*}
\end{lem}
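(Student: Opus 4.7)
The plan is to split the $s$-integration according to the structure of $k_n(s,t)$ and use two complementary pointwise bounds on $|k_n(s,t)|$. For fixed $t \in (0,1]$ write $t_n = [nt]/n$, so $0 \leq t - t_n < 1/n$. On $s \in (t_n, t]$ the second term in $k_n$ vanishes, giving $k_n(s,t) = (t-s)^{H-1/2}$, while on $s \in (0, t_n]$,
\begin{equation*}
|k_n(s,t)| = (t_n - s)^{H-1/2} - (t-s)^{H-1/2} \geq 0,
\end{equation*}
since $H - 1/2 < 0$.

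On the near region $s \in (t_n, t]$ the first integral reduces to $\int_0^1 \int_{t_n}^t (t-s)^{\alpha + H - 1/2} \mathrm{d}s\,\mathrm{d}t$. The inner integral equals $(t - t_n)^{\alpha + H + 1/2}/(\alpha + H + 1/2)$, which is finite thanks to $\alpha > -1/2 - H$, and is at most a constant times $n^{-\alpha - H - 1/2}$. Integrating over $t \in [0,1]$ preserves the rate.

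For the far region $s \in (0, t_n]$ I would use two pointwise bounds: the trivial bound $|k_n(s,t)| \leq (t_n - s)^{H-1/2}$, and a mean value bound $|k_n(s,t)| \leq (1/2 - H)(t - t_n)(t_n - s)^{H - 3/2}$ obtained from $|(a+h)^{H-1/2} - a^{H-1/2}| \leq (1/2 - H)\, h\, a^{H-3/2}$. I then split $s \in (0, t_n]$ at distance $1/n$ from $t_n$: on $(t_n - 1/n, t_n]$ the trivial bound is applied, and on $(0, t_n - 1/n]$ the mean value bound combined with $t - t_n \leq 1/n$ contributes the prefactor $n^{-1}$. On each subregion the weight $(t-s)^\alpha$ is compared with $(t_n-s)^\alpha$ using $(t_n-s) \leq (t-s) \leq (t_n-s)+1/n$ and the sign of $\alpha$, reducing everything to one-variable power integrals. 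The constraint $\alpha < 1/2 - H$ is exactly what makes the singular exponent $\alpha + H - 3/2$ barely non-integrable at $s = t_n$, producing a boundary value $(1/n)^{\alpha + H - 1/2}$ that, multiplied by the $n^{-1}$ prefactor, yields $n^{-\alpha-H-1/2}$; the constraint $\alpha > -1/2 - H$ similarly governs the first subregion.

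The second estimate, with $((\lfloor nt\rfloor/n) - s)_+^\alpha$ replacing $(t-s)^\alpha$, follows by the same argument: the contribution from $s \in (t_n, t]$ vanishes identically, and on $(0, t_n]$ the weight is already $(t_n - s)^\alpha$, to which the same split and the same two bounds apply verbatim. The main obstacle, and the reason for matching the two bounds at the scale $1/n$, is that neither bound is sharp on its own: the trivial bound ignores the smallness of $t - t_n$, whereas the mean value bound has a non-integrable singularity at $s = t_n$ under the stated range of $\alpha$. Balancing them at distance $1/n$ from $t_n$ is what yields the announced exponent.
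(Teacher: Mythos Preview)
Your argument is correct. The paper proceeds differently: it performs the single substitution $t-s = (t-[nt]/n)\,u$ (and $[nt]/n - s = (t-[nt]/n)\,u$ for the second integral), which immediately factors out $(t-[nt]/n)^{\alpha+H+1/2}$ and reduces the problem to showing that the fixed integral $\int_0^\infty u^\alpha\,|u^{H-1/2}-(u-1)_+^{H-1/2}|\,\mathrm{d}u$ is finite; this is then checked by splitting at $u=2$, with the constraint $\alpha>-1/2-H$ controlling the singularity at $u=0$ and $\alpha<1/2-H$ controlling the tail via the same mean-value bound you use. Your three $s$-regions $(t_n,t]$, $(t_n-1/n,t_n]$, $(0,t_n-1/n]$ correspond, after this substitution, roughly to $u\in[0,1]$, $u\in[1,2]$, $u\in[2,\infty)$, so the two proofs are doing the same estimates in different coordinates. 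The paper's change of variables is slicker---one line extracts the rate and the remaining task is $n$-independent---whereas your direct splitting avoids any substitution and makes the role of the cutoff scale $1/n$ more explicit; both reach the result with the same inputs.
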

 \begin{proof}
Changing variable as $t-s = (t-[nt]/n)u$, 
\begin{equation*}
\begin{split}
& \int_0^1\int_0^t (t-s)^\alpha|k_n(s,t)|\mathrm{d}s\mathrm{d}t \\
&= \int_0^1\left(t - \frac{[nt]}{n}\right)^{\alpha + H +1/2}
\int_0^{t/(t-[nt]/n)}
u^{\alpha}\left|u^{H-1/2} - (u-1)_+^{H-1/2}\right|\mathrm{d}u
\mathrm{d}t.
\end{split}
\end{equation*}
This is $O(n^{-\alpha-H-1/2})$ as claimed since
\begin{equation*}
    \int_0^2
    u^\alpha \left| u^{H-1/2} - (u-1)_+^{H-1/2}\right| <\mathrm{d} u < \infty
    \end{equation*}
for $\alpha > -1/2 -H$    and
\begin{align*}
    \int_2^\infty u^\alpha \left| u^{H-1/2} - (u-1)^{H-1/2}\right| \mathrm{d}u &= \frac{1}{|H-1/2|} \int_2^\infty u^\alpha \int_{u-1}^u v^{H-3/2} \mathrm{d}v \mathrm{d}u\\
    &\le \frac{2^{3/2-H}}{|H-1/2|} \int_2^\infty u^{\alpha+H-3/2} \mathrm{d}u \\ &= \frac{2^{1+\alpha}}{(1/2-H)(1/2-\alpha-H)}
\end{align*}
for $\alpha < 1/2-H$,
using that $v^{H-3/2}\le (u-1)^{H-3/2}$ for $u-1 \le v \le u$ and $$\sup_{u\ge 2} \frac{(u-1)^{H-3/2}}{u^{H-3/2}} = 2^{3/2-H}.$$

  The latter is similarly proved by a variable-change
  $[nt]/n-s = (t-[nt]/n)u$.
 \end{proof}
 \subsection{Proof of Theorem~\ref{th1}}
 We take the duality approach introduced by \cite{K}.
We have
\begin{equation*}
\begin{split}
 X_1 - X^n_1  = &  \int_0^1 \sigma_n(t) (Y_t -
 Y_{[nt]/n})\mathrm{d}W_t 
=  \int_0^1 \sigma_n(t)\int_0^tk_n(s,t)\mathrm{d}W_s
 \mathrm{d}W_t,
\end{split}
\end{equation*}
where
\begin{equation*}
\sigma_n(t) = \int_0^1 \sigma^\prime((1-\theta)Y_{[nt]/n} + \theta Y_t
  )\mathrm{d}\theta. 
\end{equation*}
Let
\begin{equation*}
 F_n(a) = \int_0^a f^\prime((1-b)X_1 + bX^n_1 ) \mathrm{d}b.
\end{equation*}
Then,
using the duality formula (the Malliavin integration-by-parts formula; see~\cite{Nualart} for the details), we have
\begin{equation*}
\begin{split}
 \varepsilon_n(f,a)  = &  \mathsf{E}[F_n(a)(X_1-X^n_1)] \\
= & \mathsf{E}[\int_0^1 D_tF_n(a)
 \sigma_n(t)\int_0^tk_n(s,t)\mathrm{d}W_s \mathrm{d}t]
\\
=  & \int_0^1\int_0^t \mathsf{E}[D_s(D_tF_n(a)
\sigma_n(t))] k_n(s,t)\mathrm{d}s\mathrm{d}t,
\end{split}
\end{equation*}
where $D$ denotes the Malliavin-Shigekawa derivative.
Note that
\begin{equation*}
 \begin{split}
& D_tF_n(a) = 
    \int_0^a f^{(2)}((1-b)X_1 + bX^n_1)((1-b)D_tX_1 + b D_tX^n_1)
  \mathrm{d}b,\\
 &D_tX_1 = \sigma(Y_t) + \int_t^1
  \sigma^\prime(Y_u)(u-t)^{H-1/2}\mathrm{d}W_u,\\
&  D_tX^n_1 = \sigma(Y_{[nt]/n}) + \int_t^1
  \sigma^\prime(Y_{[nu]/n})\left(\frac{[nu]}{n}-t\right)_+^{H-1/2}\mathrm{d}W_u,\\
  &
  D_s\sigma_n(t) = \int_0^1 \sigma^{(2)}((1-\theta)Y_{[nt]/n} + \theta
 Y_t)
 \left((1-\theta)\left(\frac{[nt]}{n}-s\right)_+^{H-1/2} +
 \theta(t-s)_+^{H-1/2}\right) \mathrm{d}\theta,
 \end{split}
\end{equation*}
\begin{equation}\label{DDF}
 \begin{split}
  D_s&D_t F_n(a) \\
= &  
\int_0^a f^{(2)}((1-b)X_1 + bX^n_1)((1-b)D_sD_tX_1 + b D_sD_tX^n_1)
  \mathrm{d}b\\
 & + \int_0^a f^{(3)}((1-b)X_1 + bX^n_1)
((1-b)D_sX_1 + bD_sX^n_1) 
((1-b)D_tX_1 + bD_tX^n_1)\mathrm{d}b,
 \end{split}
\end{equation}
and for $s < t$,
\begin{equation*}
 \begin{split}
  D_sD_t X_1 = &\sigma^\prime(Y_t)(t-s)^{H-1/2} + \int_t^1
  \sigma^{(2)}(Y_u)(u-s)^{H-1/2}(u-t)^{H-1/2}\mathrm{d}W_u,\\ 
  D_sD_t X^n_1 =
  &\sigma^\prime(Y_{[nt]/n})\left(\frac{[nt]}{n}-s\right)_+^{H-1/2} \\ &
  + \int_t^1
  \sigma^{(2)}(Y_{[nu]/n})\left(\frac{[nu]}{n}-s\right)_+^{H-1/2}\left(\frac{[nu]}{n}-t\right)_+^{H-1/2}\mathrm{d}W_u.
 \end{split}
\end{equation*}
Therefore, denoting by $C$ a generic constant which is independent of $a$,
we have by Minkowski's integral inequality that  
for any $n$ and $0 < s < t < 1$,
\begin{equation*}
\begin{split}
 |\mathsf{E}[D_s(D_tF_n(a)\sigma_n(t))]|& \leq 
 \|D_sD_tF_n(a)\|_2\|\sigma_n(t)\|_2
 + \|D_tF_n(a)\|_2 \|D_s\sigma_n(t)\|_2 \\
 & \leq 
C\left( \|D_sD_tF_n(a)\|_2 + \|D_tF_n(a)\|_2\left(
|t-s|^{H-1/2} + \left(\frac{[nt]}{n}-s\right)_+^{H-1/2}\right)\right)\\
&\leq 
C\Biggl(\|D_sD_tX_1\|_4 + \|D_sD_tX^n_1\|_4 
\\ & \hspace*{1cm} +
(\|D_sX_1\|_6 + \|D_sX^n_1\|_6)
(\|D_tX_1\|_6 + \|D_tX^n_1\|_6)
\\ & \hspace*{1cm} +
(\|D_tX_1\|_4 + \|D_tX^n_1\|_4)
\left(
|t-s|^{H-1/2} + \left(\frac{[nt]}{n}-s\right)_+^{H-1/2}
\right)
\Biggl) \\
& \leq 
C\left(
|t-s|^{H-1/2} + \left(\frac{[nt]}{n}-s\right)_+^{H-1/2}
\right).    
\end{split}
\end{equation*}
The result then
follows from Lemma~\ref{lem1}  with  $\alpha = H-1/2$.\qed

\subsection{Proof of Theorem~\ref{th2}}
When  $\sigma(y) = y$, we have $\sigma_n(t) = 1$, 
\begin{equation*}
 D_tX_1 = \int_0^1 |t-s|^{H-1/2}\mathrm{d}W_s, \ \ 
D_sD_t X_1 = |t-s|^{H-1/2}
\end{equation*}
and
\begin{equation*}
 D_tX^n_1 = \int_0^1\left(\frac{[nt]}{n}-s\right)_+^{H-1/2}
+  \left( \frac{[ns]}{n} - t\right)_+^{H-1/2}\mathrm{d}W_s
\end{equation*}
with $D_sD_t X^n_1 = ([nt]/n-s)_+^{H-1/2} + ([ns]/n-t)_+^{H-1/2}$.
By \eqref{DDF},
\begin{equation*}
\begin{split}
 & D_sD_t  F_n(a)    \\ &=   f^{(2)}(X_1)\int_0^a((1-b)D_sD_tX_1 +
  bD_sD_tX^n_1)\mathrm{d}b  \\
& + \int_0^a (f^{(2)}((1-b)X_1 + bX^n_1)-f^{(2)}(X_1))
 ((1-b)D_sD_tX_1 + bD_sD_tX^n_1)\mathrm{d}b \\
 & + \int_0^a f^{(3)}((1-b)X_1 + bX^n_1)
((1-b)D_sX_1 + bD_sX^n_1) 
((1-b)D_tX_1 + bD_tX^n_1)\mathrm{d}b.
\end{split}
\end{equation*}
Therefore, we have
\begin{equation*}
 \varepsilon_n(f,a) = r_{n,1}(f,a) + r_{n,2}(f,a) + r_{n,3}(f,a),
\end{equation*}
where
\begin{equation*}
 \begin{split}
&  r_{n,1}(f,a) = \mathsf{E}[f^{(2)}(X_1)] 
\int_0^a g_n(b)\mathrm{d}b,\\
&r_{n,2}(f,a) = \int_0^a \mathsf{E}[
(f^{(2)}((1-b)X_1 + bX^n_1)-f^{(2)}(X_1))] 
 g_n(b) \mathrm{d}b, \\
&g_n(b) = \int_0^1\int_0^t \left((1-b)(t-s)^{H-1/2} + b
  \left(\frac{[nt]}{n}-s\right)_+^{H-1/2}\right)
  k_n(s,t)\mathrm{d}s\mathrm{d}t,\\
&r_{n,3}(f,a) =  \int_0^a\int_0^1\int_0^t
h_n(s,t,b)k_n(s,t)
\mathrm{d}s\mathrm{d}t\mathrm{d}b,\\
&h_n(s,t,b) = \mathsf{E}[f^{(3)}((1-b)X_1 + bX^n_1)
((1-b)D_sX_1 + bD_sX^n_1) 
((1-b)D_tX_1 + bD_tX^n_1)].
 \end{split}
\end{equation*}
Notice that
\begin{equation*}
    r_{n,2}(f,a) = - \int_0^a \varepsilon_n(f^{(2)},b)
 g_n(b) \mathrm{d}b
\end{equation*}
and that $r_{n,3}(f,a) = O(n^{-H-1/2})$ uniformly in
$a \in [0,1]$ for $f \in C^3_p$
since
\begin{equation*}
\int_0^1\int_0^t
|k_n(s,t)|\mathrm{d}s\mathrm{d}t
= O(n^{-H-1/2})
\end{equation*}
by Lemma~\ref{lem1}.
Therefore,
\begin{equation*}
 \varepsilon_n(f,a) = E[f^{(2)}(X_1)]\int_0^a g_n(b)\mathrm{d}b
 -\int_0^a \varepsilon_n(f^{(2)},b)g_n(b)\mathrm{d}b + O(n^{-H-1/2}).
\end{equation*}
By induction, we obtain
 \begin{equation}\label{ind}
 \begin{split}
      \varepsilon_n(f,1) = & \sum_{i=1}^{m-1}(-1)^{i-1}
      \mathsf{E}[f^{(2i)}(X_1)]\int_0^1 \dots \int_0^{b_{i-1}} \prod_{j=1}^{i}
      g_n(b_j)\mathrm{d}b_j 
    \\ &+  (-1)^{m-1}\int_0^1\int_0^{b_1}\dots\int_0^{b_{m-2}}
   \varepsilon_n(f^{(2(m-1))},b_{m-1}) \prod_{i=1}^{m-1}g_n(b_i)\mathrm{d}b_i + O(n^{-H-1/2})
 \end{split}
 \end{equation}
 for $m \geq 2$ and $f \in C_p^{2m-1}$.
 We are going to show that the first sum is $O(n^{-1})$.
 Note that
 \begin{equation*}
     \int_0^1 \dots \int_0^{b_{i-1}} \prod_{j=1}^{i}
      g_n(b_j)\mathrm{d}b_j 
      = \frac{1}{i!} \left(\int_0^1 g_n(b)\mathrm{d}b\right)^i
 \end{equation*}
and that
\begin{equation*}
\begin{split}
\int_0^1 g_n(b)\mathrm{d}b &=
\frac{1}{2}\int_0^1\int_0^t
((t-s)^{H-1/2}  + ([nt]/n-s)_+^{H-1/2})
k_n(s,t)\mathrm{d}s\mathrm{d}t 
\\
&=\frac{1}{2}\int_0^1\int_0^t
((t-s)^{2H-1}  -([nt]/n-s)_+^{2H-1})
\mathrm{d}s\mathrm{d}t\\
&= \frac{1}{4H}\int_0^1 \left(t^{2H} -
 \left(\frac{[nt]}{n}\right)^{2H}\right)\mathrm{d}t \\
 &=\frac{1}{2} \int_0^1\int_{[nt]/n}^ts^{2H-1}\mathrm{d}s\mathrm{d}t= O(n^{-1}).
\end{split}
\end{equation*}
Therefore, from \eqref{ind}, we have
\begin{equation*}
      \varepsilon_n(f,1) =  (-1)^{m-1}\int_0^1\int_0^{b_1}\dots\int_0^{b_{m-2}}
   \varepsilon_n(f^{(2(m-1))},b_{m-1}) \prod_{i=1}^{m-1}g_n(b_i)\mathrm{d}b_i + O(n^{-H-1/2}).
 \end{equation*}
By Lemma~\ref{lem1}, we have
$g_n(b) = O(n^{-2H})$ uniformly in $b \in [0,1]$.
Since $f \in C_p^{2m+1}$ by the assumption, we have 
$\varepsilon_n(f^{(2(m-1))},b_{m-1}) = O(n^{-2H})$ uniformly in $b_{m-1}$ by
Theorem~\ref{th1}, and thus
 we conclude. \qed

\end{document}